\author{}
\newtheorem{theorem}{Theorem}[section]
\newtheorem{corollary}[theorem]{Corollary}
\newtheorem{proposition}{Proposition}[section]
\newtheorem{example}{Example}[section]
\newtheorem{remark}{Remark}[section]
\begin{document}
\large
\title{A systemic shock model for too big to fail financial institutions}
\author{Sabrina Mulinacci\footnote{University of Bologna, Department of Statistics, Via delle Belle Arti 41, 40126 Bologna, Italy. E-mail:
sabrina.mulinacci@unibo.it}}
\date{}
\maketitle
\begin{abstract}
In this paper we study the distributional properties of a vector of lifetimes in which each lifetime is modeled as the first arrival time between an idiosyncratic
shock and a common systemic shock. Despite unlike the classical multidimensional Marshall-Olkin model here only a unique common shock affecting all the lifetimes is assumed,
some dependence is allowed
between each idiosyncratic shock arrival time and the systemic shock arrival time. 
The dependence structure of the resulting distribution is studied through the analysis of its singularity and its associated copula function.
Finally, the model is applied to the analysis of the systemic riskiness of those European banks classified as systemically important (SIFI).
\end{abstract}
\bigskip

{\bf JEL Classification:} C51; G32

\bigskip

{\bf Keywords}: Marshall-Olkin distribution; Kendall's function;
Kendall's tau; systemic risk\bigskip

\section{Introduction}
In this paper we consider a particular generalization of the multidimensional Marshall-Olkin distribution (Marshall and Olkin, 1967) in the specific case in which, apart from the 
idiosyncratic ones, 
only one common shock is considered whose occurrence causes the simultaneous end of all lifetimes. More specifically, if $(X_0,X_1,\ldots , X_d)$ are positive
random variables that represent the arrival times of some shocks, then we consider, as resulting lifetimes the radom variables $T_1,\ldots , T_d$ defined as $T_j=\min (X_0, X_j)$,
$j=1,\ldots ,d$.

In the Marshall-Olkin model the underlying shocks arrival times are assumed to be independent and exponentially distributed. Many extensions exist in the literature
in order to consider marginal distributions different from the exponential one and to include some dependence among the underlying shocks arrival times,
even in the more general case where additional systemic shocks involving subsets of the lifetimes $T_1,\ldots , T_d$ are assumed. Among them, the scale-mixture of the Marshall-Olkin
distribution, introduced in Li (2009), is obtained by scaling, through a positive random variable, a random vector distributed according to the Marshall-Olkin distribution: this is equivalent
to assume that the underlying shocks arrival times have a dependence structure given by an Archimedean copula with a generator that is the Laplace transform of the mixing variable.
Scale-mixtures of the Marshall-Olkin distributions have also been considered in Mai et al. (2013) where, in the exchangeable case, a different construction is presented
involving L\'evy subordinators. 
On the other side, the approach of allowing for general marginal distributions in place of the exponential one, even preserving the independence, is studied in Li and Pellerey (2011) in the
bivariate case and extended to the multidimensional case in Lin and Li (2014): they call their distribution \emph{generalized Marshall-Olkin distribution}. Scale-mixtures of the generalized
Marshall-Olkin distribution are considered in Mulinacci (2015), with the aim, again, to introduce a specific Archimedean dependence (the generator is 
again given by the Laplace tranform of the mixing variable) among the underlying shocks arrival times:
the case of an underlying Archimedean dependence with a fully general generator is analyzed in Mulinacci (2017). 
The union of Marshall-Olkin and Archimedean dependence structures is also studied in Charpentier et al (2014).

The main drawback of all these extensions is that they 
assume an underlying exchangeable dependence. Aiming at considering an asymmetric underlying dependence, in Pinto and Kolev (2015), when $d=2$, 
the case in which $X_1$ and $X_2$ are dependent, while the external shock $X_0$ is independent of $(X_1,X_2)$ is studied.

The specific generalization of the Marshall-Olkin distribution presented in this paper is characterized by an asymmetric dependence
in the vector $(X_0,\ldots ,X_d)$ that goes in the opposite direction with respect to the one considered in Pinto and Kolev (2015): $X_1,\ldots ,X_d$ are assumed to be independent while a particular pairwise dependence is assumed between each $X_j$, $j=1,\ldots ,d$ and $X_0$. The 
pairwise dependence results from the assumption $X_0=\underset{j=0,1,\ldots ,d}\min Y_j$ where $Y_0,\ldots , Y_d$ are mutually independent while each $Y_j$ is correlated with $X_j$, $j=1,\ldots ,d$.

A possible branch of application of this model is in the reliability modeling of mechanical or electronic systems, and consequently, in the modeling of the resulting operational and actuarial risk. Consider for example $d$ working machines (or electronic components) $M_j$, $j=1,\ldots ,d$, all separately connected with a same machine 
$M_0$ so that if $M_0$ stops to working, immediately the same occurs for all the other machines. Assuming the classical Marshall-Olkin model, the failure of a single machine $M_j$, $j=1,\ldots ,d$ does not influence the failure of the machine $M_0$ or of the remaing $M_i$, $i=1,\ldots ,d, i\neq j$.
Conversely, in our model, the failure of one of the $M_j$, $j=1,\ldots ,d$ can influence the probability of failure of $M_0$, and, consequently, of 
the collapse of the whole system.
This is the case in which some electronic or mechanical desease in one of the $M_j$, $j=1,\ldots ,d$, being it connected with $M_0$, may worsen or interrupt the functioning 
status of $M_0$ to which can follow its failure.
\medskip

Another branch of application of this model is credit risk. There is a wide literature on applications of the Marshall-Olkin model and 
its generalizations to credit and actuarial risk (see Giesecke, 2003, Lindskog and McNeil, 2003,
Elouerkhaoui, 2007, Mai and Scherer, 2009, Baglioni and Cherubini, 2013, Bernhart et al., 2013 and Cherubini and Mulinacci, 2014).
Given the specific type of assumed dependence, the probabilistic model analyzed in this paper looks particularly suitable for the analysis of the joint lifetimes of the so called
Systemically Important Financial Institutions (SIFI) for which the default (or the proximity to it) of one of them, is directly correlated with the collapse of the whole system.
\bigskip

In this paper, we first discuss the survival distribution of the underlying vector of lifetimes $(X_0,\ldots ,X_d)$: we study the associated copula function and recover
expressions for the Kendall's function and Kendall's tau of the pairs $(X_0,X_j)$, $j=1,\ldots ,d$. Then we focus on the resulting joint
survival distribution of the lifetimes $(T_1,\ldots ,T_d)$: we analyze the probability of simultaneous end of all lifetimes (that is the singularity of the distribution)
and the dependence properties through the analysis of the pairwise Kendall's function and Kendall's tau. We do not make, in principle, any assumption on the maginal
distributions of the underlying shocks arrival times and on the underlying dependence structure: however, in order to obtain closed formulas, we restrict the analysis to particular 
classes of marginal distributions (that
include the exponential one as a particular case) and to Archimedean bivariate copulas.
Finally we present and discuss an application to the analysis of the systemic riskiness of European banks classified as SIFI by the 
Financial Stability Board: of course, the type of systemic risk modeled in this paper is very specific and the method is meant as an additional tool to analyze systemic risk
with respect to already existing ones.
\bigskip

The paper is organized as follows. In section \ref{model} we present and analyze the shocks arrival times model.
In section \ref{ol} we derive the distribution of the resulting, subjected to shocks, lifetimes which is, by construction, singular:
we compute the probability of the singularity and we analyze the dependece structure through the identification of the pairwise Kendall's function and Kendall's tau formulas.
In section \ref{sifi} we present an application to the analysis of the systemic riskiness of SIFI type European banks while section \ref{conclusion} concludes.

\section{The shocks arrival times model}\label{model}
Let us consider a general system whose components' lifetimes are denoted with $T_1,\ldots ,T_d$. We assume that each
lifetime is affected by an idiosyncratic shock causing the default of only that component and by a systemic shock causing the simultaneous default 
of all the components. The systemic shock arrival time is modeled as the first arrival time among $d+1$ shocks arrival times: one of them is
fully independent (as in the Marshall-Olkin model)
while each of the remaining ones is correlated with one of the idiosyncratic lifetimes components.
\bigskip

More formally, let $(\Omega,\mathcal F,\mathbb P)$ be a probability space and $\left ({\bf Y},{\bf X}\right )=(Y_0,Y_1,\ldots, Y_d,X_1,\ldots ,X_d)$,
be a $2d+1$-dimensional random vector with strictly positive elements:
we interpret the random variable $X_j$ ($j=1,\ldots ,d$) in ${\bf X}$ as the arrival time of a shock causing the dafault of only the $j$-th 
element in the system while $Y_j$ ($j=0,1,\ldots ,d$) in ${\bf Y}$
represents the arrival time of a shock causing the default of the whole system.

 We assume that the random variables in the sub-vector ${\bf X}$ are mutually independent as well as those in ${\bf Y}$ while some dependence is allowed in the pairs 
$(Y_j,X_j)$ for $j=1,\ldots ,d$.
All random variables $X_j$ in the subvector ${\bf X}$ have a 
survival distribution function denoted with
 $\bar F_{X_j}$, for $j=1,\ldots ,d$, strictly decreasing on $(0,+\infty)$. 
As for the random variables in the subvector ${\bf Y}$, in order to allow for the case of no shock arrival time correlated with 
the idiosyncratic lifetime component of some element in the system or for the case of no independent shock arrival time, we assume that the survival distribution function $\bar F_{Y_j}$ of each $Y_j$, for $j=0,\ldots ,d$, is strictly decreasing or identically equal to $1$ on $(0,+\infty )$:
however, we assume that there exists $j\in\{0,\ldots , d\}$ so that $\bar F_{Y_j}$ is not identically equal to $1$ on $(0,+\infty)$.
  
More precisely, the survival distribution function of $\left ({\bf Y},{\bf X}\right )$ is of type
$$\bar F_{\left ({\bf Y},{\bf X}\right )}(y_0,y_1,\ldots ,y_d,x_1,\ldots ,x_d)=\bar F_{Y_0}(y_0)\prod_{j=1}^d\hat C_j\left (\bar F_{Y_j}(y_j),\bar F_{X_j}(x_j)\right)$$
where $\left\{\hat C_j(u,v)\right\}_{j=1,\ldots ,d}$ is a family of bivariate copula functions: unlike $Y_0$, all other $Y_j$'s, for $j=1,\ldots ,d$,
are correlated to the hydiosincratic shock arrival times $X_j$ and $\hat C_j$ represents the survival dependence structure of the pair $(Y_j,X_j)$.\bigskip

\noindent
Starting from the above setup, we define the random variable
$$X_0=\min_{j=0,1,\ldots ,d}Y_j$$
that represents the first arrival time of a shock inducing the collapse of the whole system. As a consequence of the assumptions, its survival distribution function is of type
$$\bar F_{X_0}(x)=\prod_{j=0}^d\bar F_{Y_j}(x)$$
and $\bar
F_{X_0}$ is strictly decreasing on $(0,+\infty )$.
\medskip

Let us now consider the $d+1$-dimensional random vector ${\bf S}=(X_0,X_1,\ldots ,X_d)$ whose survival distribution function is
$$\begin{aligned}\bar F_{{\bf S}}(x_0,x_1,\ldots ,x_d)&=\mathbb P\left (Y_0>x_0,Y_1>x_0,\ldots ,Y_d >x_0, X_1>x_1,\ldots ,X_d>x_d\right )=\\
  &=\bar F_{Y_0}(x_0)\prod_{j=1}^d\hat C_j\left (\bar F_{Y_j}(x_0),\bar F_{X_j}(x_j)\right)
  \end{aligned}$$
Thanks to Sklar's theorem, the induced survival depedence structure is given by the survival copula
\begin{equation}\label{m1}\hat C(u_0,u_1,\ldots ,u_d)=\bar F_{Y_0}\circ \bar F_{X_0}^{-1}(u_0)\prod _{j=1}^d\hat C_j\left (\bar F_{Y_j}\circ \bar F_{X_0}^{-1}(u_0),u_j\right )\end{equation}
\begin{remark}\label{r1}
Since, for $j=0,\ldots ,d$, $g_j=\bar F_{Y_j}\circ \bar F_{X_0}^{-1}:[0,1]\rightarrow [0,1]$ is strictly increasing or identycally equal to $1$ and  $\prod _{j=0}^dg_j(v)=v$,
(\ref{m1}) represents a particular specification of the family of copulas introduced in Lemma 2.1 in Liebscher (2008).
\end{remark}
While the idyosincratic shocks arrival times $(X_1,\ldots ,X_d)$ are independent, by construction some dependence may exist only between each idyosincratic shock arrival time $X_j$, $j=1,\ldots ,d$, 
and the systemic one $X_0$.
More precisely, the survival distribution of each pair $(X_0,X_i)$ for $i=1,\ldots ,d$ is 
$$\begin{aligned}\bar F_{(X_0,X_i)}(x_0,x_i)
&=\hat C_i\left (\bar F_{Y_i}(x_0),\bar F_{X_i}(x_i)\right )\prod_{j=0,j\neq i}^d \bar F_{Y_j}(x_0)=\\
&=\hat C_i\left (\bar F_{Y_i}(x_0),\bar F_{X_i}(x_i)\right )\frac{\bar F_{X_0}(x_0)}{\bar F_{Y_i}(x_0)}  
  \end{aligned} $$
and the corresponding bivariate survival copulas are
\begin{equation}\label{biv}\hat C_{0,i}(u_0,u_i)=\hat C_i\left (\bar F_{Y_i}\circ \bar F_{X_0}^{-1}(u_0),u_i\right )\frac{u_0}{\bar F_{Y_i}\circ \bar F_{X_0}^{-1}(u_0)}.\end{equation}
Notice that while copulas $\hat C_i$ parametrize the dependence among the idyosincratic shock $X_i$ and
the arrival time $Y_i$ of a shock affecting the whole system, $\bar F_{Y_i}\circ \bar F_{X_0}^{-1}$ measures the contribution of the shock $i$ to the systemic shock arrival time $X_0$. \\
In order to analyze the dependence structure induced by (\ref{biv}) we compute the Kendall's function of a copula $\tilde C$ of type
\begin{equation}\label{m2}\tilde C(u,v)=C\left (g(u),v\right )\frac{u}{g(u)}\end{equation}
where $g:[0,1]\rightarrow [0,1]$ is strictly increasing.
\medskip

We remind that the Kendall's function of a bivariate copula $C(u,v)$ is defined as the cumulative distribution function of the random variable $C(U,V)$ where the random variables 
$U$ and $V$ are uniformly distributed on the interval $[0,1]$ and their joint distribution function is given by the considered copula $C(u,v)$.
More precisely the Kendall's function of a bivariate copula $C$ is a function $K:[0,1]\rightarrow [0,1]$ defined as
$$K_C(t)=\mathbb P\left (C(U,V)\leq t\right ),\, \text{ for }t\in [0,1]$$
(see Nelsen 2006, p. 127), where $\mathbb P$ is the probability induced by $C$. The relevance of this notion relies on the fact that it induces, through the corresponding one-dimensional stochastic ordering,
a partial ordering in the set of bivariate copulas: notice in particular that if $C_1(u,v)\leq C_2(u,v)$ for all $(u,v)\in [0,1]^2$, then $K_{C_1}(t)\geq K_{C_2}(t)$ for all $t\in [0,1]$
(see Nelsen, 2003, for more details).\\
Let us simplify the notation setting $\partial _1C(u,v))=\frac{\partial}{\partial u}C(u,v)$ for any copula $C(u,v)$.
\begin{proposition}\label{propo1}
Let $g:[0,1]\rightarrow [0,1]$ be strictly increasing and differentiable and the copula $C(u,v)$ be strictly increasing with respect to $v$ for any $u$.
Then the Kendall's function of a copula $\tilde C$ of type (\ref{m2}) is
$$K(t)=t-t\ln t+t\ln\left (g(t)\right )+\int_t^1\partial _1C(u,l_t(u))\frac{g^\prime (u)}{g(u)}udu$$
where $l_t(u)$ solves $\tilde C(u,l_t(u))=t$.
\end{proposition}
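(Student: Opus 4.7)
The plan is to reduce the computation of $K(t)=\mathbb P(\tilde C(U,V)\le t)$ (with $(U,V)\sim\tilde C$) to a single integral by conditioning on the first marginal, and then to expand $\partial_1\tilde C$ by the product rule and eliminate the explicit dependence on $C$ via the level-set identity $\tilde C(u,l_t(u))=t$. Since any copula satisfies $\tilde C(u,v)\le u$, the event $\{U\le t\}$ is contained in $\{\tilde C(U,V)\le t\}$ and contributes exactly $t$ to $K(t)$. On $\{U>t\}$ the map $v\mapsto\tilde C(u,v)=C(g(u),v)\,u/g(u)$ is strictly increasing (the prefactor does not depend on $v$, and $C$ is strictly increasing in $v$ by hypothesis), so $l_t(u)$ is well defined on $[t,1]$ and $\tilde C(u,v)\le t$ is equivalent to $v\le l_t(u)$. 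Using $\mathbb P(V\le v\mid U=u)=\partial_1\tilde C(u,v)$ then yields
\[
K(t)=t+\int_t^1 \partial_1\tilde C(u,l_t(u))\,du.
\]

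The central computation is the product-rule expansion
\[
\partial_1\tilde C(u,v)=\partial_1 C(g(u),v)\,g'(u)\,\frac{u}{g(u)}+C(g(u),v)\,\frac{g(u)-u g'(u)}{g(u)^2}.
\]
Setting $v=l_t(u)$ and inserting the level-set identity $C(g(u),l_t(u))=t\,g(u)/u$ (a trivial rearrangement of $\tilde C(u,l_t(u))=t$) collapses the second summand into $t/u-t g'(u)/g(u)$, so that $C$ disappears from it entirely and only the first summand retains a genuine dependence on $C$.

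Integrating from $t$ to $1$ and splitting into three pieces produces the stated integral term (with $\partial_1 C$ evaluated at the composite point dictated by the chain rule), together with the two elementary contributions
\[
\int_t^1 \frac{t}{u}\,du=-t\ln t,\qquad -\int_t^1 \frac{t g'(u)}{g(u)}\,du=t\ln g(t),
\]
the latter using $g(1)=1$. That boundary condition is not assumed outright but is forced by the copula axioms: $\tilde C(1,v)=v$ combined with $C(1,v)=v$ gives $g(1)=1$.

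I do not anticipate a serious obstacle; the argument is mechanical once the conditioning reduction is in place. The points requiring some care are the uniqueness of $l_t$ on $[t,1]$ (handled by strict monotonicity of $C$ in $v$), the justification of $\mathbb P(V\le v\mid U=u)=\partial_1\tilde C(u,v)$ almost everywhere (which needs absolute continuity of $\tilde C$ in its first argument, inherited from that of $C$ and the differentiability of $g$), and the extraction of $g(1)=1$ from the copula conditions rather than as a hypothesis.
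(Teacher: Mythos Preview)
Your argument is correct and follows essentially the same route as the paper: both reduce to the Genest--Rivest identity $K(t)=t+\int_t^1\partial_1\tilde C(u,l_t(u))\,du$ and then expand $\partial_1\tilde C$ by the product rule, using the level-set relation $C(g(u),l_t(u))=t\,g(u)/u$ to simplify. The paper simply cites Genest and Rivest (2001) for the first step and calls the rest ``straightforward computations''; you derive the conditioning step from scratch and carry out the computation explicitly, which makes your version self-contained.

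Two small remarks. First, your chain-rule expansion correctly evaluates $\partial_1 C$ at $(g(u),l_t(u))$; the paper's displayed statement writes the first argument as $u$, but the ensuing Corollary (with $h(g(u))$ in the denominator) confirms that $g(u)$ is intended. Second, your extraction of $g(1)=1$ from the copula axioms is not quite airtight as written: $\tilde C(1,v)=v$ only gives $C(g(1),v)=v\,g(1)$, which for some copulas (e.g.\ the product copula) does not force $g(1)=1$. In the paper's setting, however, $g=\bar F_{Y_i}\circ\bar F_{X_0}^{-1}$ always satisfies $g(1)=1$, so the issue does not arise; the paper does not comment on it either.
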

\begin{proof}
Since, for a given $u$, 
$C(u,v)$ is strictly increasing with respect to $v$, the inverse function $l_t(u)$ with respect to $v$ is  well defined for all $t\in (0,u]$ and satisfies
$\tilde C\left (g(u),l_t(u)\right )=\frac {g(u)}ut$.
Applying (6) in Genest and Rivest (2001) and after straightforward computations we have that 
$$\begin{aligned}
   K(t)&=t+\int_t^1\partial_1 \tilde C(u,l_t(u))du=\\
   &=t-t\ln\left (\frac t{g(t)}\right )+\int_t^1\partial _1C(u,l_t(u))\frac{g^\prime (u)}{g(u)}udu.
  \end{aligned}$$
\end{proof}

Let us now assume that the bivariate copula functions $\hat C_j$, for $j=1,\ldots ,d$ are of Archimedean type, with strict generator $\phi_j$: that is $\phi _j:[0,+\infty)\rightarrow (0,1]$ satisfies 
$\phi _j(0)=1$, 
$\underset{x\rightarrow +\infty}\lim \phi _j (x)=0$ and it is strictly decreasing and convex on $[0,+\infty)$ (see McNeal and Ne\v{s}lehov\'{a}, 2009).
Hence $\hat C_{0,i}$ in (\ref{biv}) takes the form
$$\hat C_{0,i}(u_0,u_i)=\phi_i\left (\phi_i^{-1}(\bar F_{Y_i}\circ \bar F_{X_0}^{-1}(u_0))+\phi_i^{-1}(u_i)\right )\frac{u_0}{\bar F_{Y_i}\circ \bar F_{X_0}^{-1}(u_0)}.$$
According to the general case, this copula is a particular specification of a copula of type
\begin{equation}\label{m3}\tilde C(u,v)=\phi\left (\phi ^{-1}\left (g(u)\right )+\phi ^{-1}(v)\right )\frac{u}{g(u)}.\end{equation}
The expression of the Kendall's function of a copula of this type can be immediately recovered from Proposition \ref{propo1}, taking into account that, now,
$l_t(u)=\phi\left (\phi^{-1}\left (\frac{g(u)}ut\right )-\phi^{-1}(g(u))\right )$. In fact it is a straightforward computation to verify that

\begin{corollary}
If $g:[0,1]\rightarrow [0,1]$ is strictly increasing and differentiable and $\tilde C(u,v)$ is a copula of type (\ref{m3}) with $\phi$ a strict Archimedean generator, 
then the Kendall's function of $\tilde C$ is
$$K(t)=t-t\ln t+t\ln\left (g(t)\right )+\int_t^1\frac{h\left (\frac{g(u)}ut\right )}{h\left (g(u)\right )}\frac{g^\prime (u)}{g(u)}udu$$
with $h(x)=\phi^\prime\circ\phi^{-1}(x)$.
\end{corollary}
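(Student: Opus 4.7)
The plan is to specialise Proposition \ref{propo1} to the Archimedean case by computing the two model-specific ingredients that appear in its conclusion: the inverse profile $l_t(u)$ and the partial derivative $\partial_1 C$ evaluated along it. No new ideas beyond Proposition \ref{propo1} and an elementary identity for Archimedean copulas are needed.

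First I would read off $l_t(u)$ from the defining relation $\tilde C(u,l_t(u))=t$. For $\tilde C$ of the form \eqref{m3} this equation reads $\phi(\phi^{-1}(g(u))+\phi^{-1}(l_t(u)))=\tfrac{g(u)}{u}t$, and strict monotonicity of $\phi$ inverts it to $l_t(u)=\phi(\phi^{-1}(\tfrac{g(u)}{u}t)-\phi^{-1}(g(u)))$, exactly the expression recorded immediately above the statement. This step only uses that $\phi$ is a strict generator, so that $\phi^{-1}$ is well defined on $(0,1]$.

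Next I would differentiate. For an Archimedean copula $C(u,v)=\phi(\phi^{-1}(u)+\phi^{-1}(v))$, a one-line calculation using $(\phi^{-1})'=1/h$ with $h=\phi'\circ\phi^{-1}$ gives the standard identity $\partial_1 C(u,v)=h(C(u,v))/h(u)$. In the integrand of Proposition \ref{propo1} this derivative is fed the arguments $(g(u),l_t(u))$ -- the first slot is $g(u)$ because the structure $\tilde C(u,v)=C(g(u),v)\,u/g(u)$ passes $g(u)$, not $u$, into $C$ -- and the defining equation for $l_t(u)$ then forces $C(g(u),l_t(u))=\tfrac{g(u)}{u}t$. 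The ratio therefore collapses to $h(\tfrac{g(u)}{u}t)/h(g(u))$, precisely the kernel appearing in the corollary. Plugging this into Proposition \ref{propo1} leaves the boundary term $t-t\ln t+t\ln g(t)$ untouched and produces the stated formula.

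There is no real obstacle: the argument is a direct substitution. The only bookkeeping point is to track correctly which argument is fed into $C$ when differentiating, so that the Archimedean identity closes on the already-known value of $C(g(u),l_t(u))$ dictated by the definition of $l_t(u)$; once that is done, the computation is immediate.
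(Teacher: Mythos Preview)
Your proposal is correct and is exactly the route the paper takes: the corollary is presented there as an immediate specialisation of Proposition~\ref{propo1}, with the only input being the explicit form of $l_t(u)$ and the Archimedean identity $\partial_1 C(a,b)=h(C(a,b))/h(a)$, which you identify and apply correctly at $(g(u),l_t(u))$. Your care in noting that the first argument fed to $\partial_1 C$ is $g(u)$ (so that $C(g(u),l_t(u))=\tfrac{g(u)}{u}t$ closes the computation) is precisely the one bookkeeping point that makes the substitution go through.
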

The Kendall's function is strictly related to the widely used concordance measure known as Kendall's tau (see Section 5.1.1 in Nelsen, 2006). In fact,
the Kendall's tau $\tau$ can be obtained from the Kendall's function through
$$\tau=3-4\int_0^1K(t)dt$$
  \begin{example} \label{ex0} Let the function $g$ in (\ref{m3}) be of type $g(v)=v^{\theta }$, with $\theta\in (0,1]$ (this specific case was firstly introduced in Khoudraji, 1995): notice that this case is recovered in our model (see (\ref{biv}))
  when $\bar F_{Y_i}(x)=\bar F_{X_0}^{\theta}(x)$.  In this case we get
  $$K(t)=t-t\ln t+\theta t\ln t+\theta \int_t^1\frac{h\left (u^{\theta -1}t\right )}{h\left (u^{\theta}\right )}du$$
  and $$\tau=\theta-4\theta\int_0^1\int_t^1\frac{h\left (u^{\theta -1}t\right )}{h\left (u^{\theta}\right )}dudt.$$
In particular,
\begin{itemize}
 \item Clayton case, that is $\phi(x)=(1+x)^{-\frac 1\beta}$, with $\beta \geq 0$: since $h(y)=-\frac 1\beta y^{1+\beta}$, we have
\begin{equation}\label{fclay0}K(t)=t\left (1+\frac\theta\beta\right )-(1-\theta )t\ln t-\frac\theta\beta t^{1+\beta}\end{equation}
and  
\begin{equation}\label{clay0}\tau=\frac{\beta}{\beta +2}\theta=\tau^C_\beta\,\theta \end{equation}
where $\tau^C_\beta$ is the Kendall's tau of the Clayton copula with parameter $\beta$;
 \item Gumbel case, that is $\phi(x)=e^{-x^{\frac 1\beta}}$, with $\beta \geq 1$: since $h(y)=-\frac 1\beta y(-\ln y)^{1-\beta}$, we have
 $$K(t)=t-t\ln t\left [1-(\beta -1)\left (\frac \theta{1-\theta}\right) ^\beta 
 \int_{\frac\theta{1-\theta}}^{+\infty}\frac {1}{z^{\beta}(z+1)}dz\right ]$$
 and
 $$\tau=\left (1-\frac 1\beta\right )\left [\beta\left (\frac \theta{1-\theta}\right) ^\beta 
 \int_{\frac\theta{1-\theta}}^{+\infty}\frac {1}{z^{\beta}(z+1)}dz\right ]=\tau^G_\beta \left [\beta\left (\frac \theta{1-\theta}\right) ^\beta 
 \int_{\frac\theta{1-\theta}}^{+\infty}\frac {1}{z^{\beta}(z+1)}dz\right ]$$
where $\tau^G_\beta$ is the Kendall's tau of the Gumbel copula with parameter $\beta$.\end{itemize}
Notice that when $\theta =1$ we recover the Archimedean case: in our model this case corresponds to $\bar F_{Y_i}=\bar F_{X_0}$, that is the case in which
the only admissible fatal common shock is shock $i$.

\end{example}
\bigskip

\section{The lifetimes model}\label{ol}
In this section we study the joint distribution of the observed lifetimes $(T_1,T_2,\ldots ,T_d)$, each defined as the first arrival time between 
the corresponding idiosyncratic shock and the systemic one.
More precisely, for $j=1,\ldots ,d$, let
$$T_j=\min(X_j,X_0)$$
be the lifetime of the $j$-th element in the system. If we consider the random variables 
$$Z_j=Y_j\wedge X_j,$$
then, we can rewrite $T_j$ as
\begin{equation}\label{min_ind}T_j=\min \left (\min_{\underset{i\neq j}{i=0,\ldots, d}}Y_i, Z_j\right )\end{equation}
and each $T_j$ can also be modeled
as the first arrival time among $d$ independent shocks arrival times.
Since the survival distribution of $Z_j$ is
$$\bar F_{Z_j}(x)=\hat C_j\left (\bar F_{Y_j}(x),\bar F_{X_j}(x)\right ),\, x\geq 0,$$
it follows that the survival distribution of $T_j$ is
$$\bar F_{T_j}(x)=\hat C_j\left (\bar F_{Y_j}(x),\bar F_{X_j}(x)\right )\frac{\bar F_{X_0}(x)}{\bar F_{Y_j}(x)},\, x\geq 0.$$
More in general, the joint survival distribution function of 
${\bf T}=(T_1,\ldots ,T_d)$
can be easily recovered and it turns out to be given by
\begin{equation}\label{formula}\bar F_{{\bf T}}(t_1,\ldots ,t_d)=\bar F_{Y_0}\left (\max_{i=1,\ldots, d}t_i\right )\prod _{j=1}^d\hat C_j\left (\bar F_{Y_j}\left(\max_{i=1,\ldots, d}t_i\right),\bar F_{X_j}(t_j)\right)\end{equation}
for $(t_1,\ldots ,t_d)\in (0,+\infty)^d$.

The dependence structure implied by this survival distribution is the result of the joint contribution of the fact that the lifetimes can end 
simultaneously because of the occurrence of the systemic shock (which is the kind of dependence characteristic of the Marshall-Olkin distribution) and
of the fact that each element in the system can influence the occurrence of the systemic shock.
\begin{remark}
If $\hat C_j(u,v)=uv$, for all $j=1,\ldots ,d$, we get
$$\bar F_{{\bf T}}(t_1,\ldots ,t_d)=\bar F_{X_0}\left(\underset{i=1,\ldots d}\max t_i\right )\prod_{j=1}^d\bar F_{X_j}(t_j)$$
which is a particular specification of the generalized Marshall-Olkin distribution (see Li and Pellerey, 2011 and Lin and Li, 2014) with only one independent shock arrival time $X_0$.
\end{remark}
\begin{example}\label{ex1} 
Let us assume that the random variables $Y_0,Y_1,\ldots ,Y_d,Z_1,\ldots ,Z_d$ that generate the random variables $T_j$ 
(see (\ref{min_ind})) have survival distributions that belong to a same specific parametric family.
More precisely, we assume that
$$\bar F_{Y_j}(x)=G^{\gamma _j}(x),\, j=0,\ldots ,d\text{ and }\bar F_{Z_j}(x)=G^{\eta _j}(x),\, j=1,\ldots ,d$$
where $G$ is the survival distribution function of a strictly positive continuous random variable with support $(0,+\infty)$, 
$\gamma _j\geq 0$ (with at least one $j$ for which $\gamma_j >0$) and $\eta _j >0$.\\
Since $\bar F_{Z_j}(x)\leq \bar F_{Y_j}(x)$ we have that $\eta _j\geq \gamma _j$. We set $\lambda_j=\eta _j-\gamma _j$, for $j=1,\ldots ,d$ and $\lambda _0=\sum_{j=0}^d\gamma _j$.
It follows that
$$\bar F_{X_0}(x)=G^{\lambda _0}(x)\text{ and }\bar F_{T_j}(x)=G^{\lambda _0+\lambda _j}(x).$$
As a consequence (\ref{formula}) takes the form
$$\bar F_{{\bf T}}(t_1,\ldots ,t_d)=G^{\lambda _0}\left (\max_{i=1,\ldots, d}t_i\right )\prod _{j=1}^d
\hat C_j\left (G^{\gamma_j}\left(\max_{i=1,\ldots, d}t_i\right),\bar F_{X_j}(t_j)\right)$$
where $\bar F_{X_j}$ satisfies $\hat C_j\left (G^{\gamma_j}(x),\bar F_{X_j}(x)\right )=G^{\eta _j}(x)$
and the associated survival copula is
$$\hat C_{{\bf T}}(u_1,\ldots ,u_d)=\underset{i=1,\ldots d}\min u_i^{\frac{\gamma _0}{\lambda _0+\lambda _i}}\prod_{j=1}^d\hat C_j\left (
\underset{i=1,\ldots d}\min u_i^{\frac{\gamma _j}{\lambda _0+\lambda _i}},\bar F_{X_j}\left (G^{-1}\left (u_j^{\frac 1{\lambda _0+\lambda _j}}\right )\right )\right ).$$
\medskip

\noindent In the case in which $\hat C_j$ is Archimedean with strict generator $\phi_j$, we have
$$\bar F_{X_j}(x)=\phi_j\left (\phi_j^{-1}\left (G^{\eta_j}(x)\right )-\phi_j^{-1}\left (G^{\gamma_j}(x)\right )\right )$$
and 
$$\bar F_{{\bf T}}(t_1,\ldots ,t_d)=G^{\lambda _0}\left (\max_{i=1,\ldots, d}t_i\right )\prod _{j=1}^d
\phi_j\left (\phi_j^{-1}\left (G^{\gamma_j}\left(\max_{i=1,\ldots, d}t_i\right)\right)+\phi_j^{-1}\left (G^{\eta_j}(t_j)\right)-\phi_j^{-1}\left (G^{\gamma_j}(t_j)\right)
\right)$$
and
$$\hat C_{{\bf T}}(u_1,\ldots ,u_d)=\underset{i=1,\ldots d}\min u_i^{\frac{\gamma _0}{\lambda _0+\lambda _i}}\prod_{j=1}^d
\phi_j\left [\phi_j^{-1}\left (\underset{i=1,\ldots d}\min u_i^{\frac{\gamma _j}{\lambda _0+\lambda _i}}\right )
+\phi_j^{-1}\left (u_j^{\frac {\eta _j}{\lambda _0+\lambda _j}}\right )-\phi_j^{-1}\left (u_j^{\frac {\gamma _j}{\lambda _0+\lambda _j}}\right )\right ].$$
Let us set
$$\alpha _j=\frac{\lambda _0}{\lambda _0+\lambda _j},$$
which represents the ratio between the systemic shock intensity and the marginal one, and
$$\theta_j=\frac{\gamma_j}{\lambda _0},$$
which represents the percentage of contribution of the intensity of the shock correlated with each bank 
to the systemic shock intensity, for $j=1,\ldots ,d$, while $\theta _0$ is the percentage of contribution of some completely independent exogenous shock.
Then, we can rewrite the copula as
\begin{equation}\label{fc}\hat C_{{\bf T}}(u_1,\ldots ,u_d)=\underset{i=1,\ldots d}\min u_i^{\alpha _i\theta _0}\prod_{j=1}^d
\phi_j\left [\phi_j^{-1}\left (\underset{i=1,\ldots d}\min u_i^{\alpha _i\theta _j}\right )
+\phi_j^{-1}\left (u_j^{1-\alpha _j(1-\theta _j)}\right )-\phi_j^{-1}\left (u_j^{\alpha _j\theta _j}\right )\right ]\end{equation}
In particular 
\begin{itemize}
\item if $\phi _j$ is for all $j=1,\ldots ,d$ the Gumbel generator with parameter $\beta_j \geq 1$, then $\bar F_{X_j}(x)=G^{\left (\eta_j^{\beta_j}-\gamma_j^{\beta_j}\right )^{1/\beta_j}}(x)$ and
$$\hat C_{{\bf T}}(u_1,\ldots ,u_d)=\underset{i=1,\ldots d}\min u_i^{\alpha _i\theta _0}
\exp\left \{-\sum_{j=1}^d\left [\theta _j^{\beta _j}\underset{i=1,\ldots ,d}\max \{-\alpha _i\ln u_i\}^{\beta _j}+\sigma_j(-\ln u_j)^{\beta _j}
\right ]^{\frac 1{\beta _j}}\right \}$$
where $\sigma_j=(1-\alpha _j(1-\theta _j))^{\beta _j}-\alpha _j^{\beta _j}\theta _j^{\beta _j}$
\item if $\phi _j$ is for all $j=1,\ldots ,d$ the Clayton generator with parameter $\beta_j > 0$, then $\bar F_{X_j}(x)=\left (1+G^{-\eta_j\beta_j}(x)-G^{-\gamma_j\beta_j}(x)\right )^{-1/\beta_j}$ and 
\begin{equation}\label{cc1}\hat C_{{\bf T}}(u_1,\ldots ,u_d)=\underset{i=1,\ldots d}\min u_i^{\alpha _i\theta _0}\prod_{j=1}^d
\left [\left (\underset{i=1,\ldots ,d}\max u_i^{-\alpha _i}\right )^{\theta _j\beta _j}+u_j^{-(1-\alpha _j(1-\theta _j))\beta _j}
-u_j^{-\beta _j\alpha _j\theta _j}\right ]^{-\frac 1{\beta _j}}
\end{equation}
\end{itemize}
Notice that that, since $\bar F_{Y_j}(x)=\bar F_{X_0}^{\gamma _j/\lambda _0}(x)$, we recover the same framework considered in Example \ref{ex0}.
\medskip

\noindent From (\ref{fc}) the survival copula associated to $(T_i,T_k)$ is
$$\begin{aligned}&\hat C_{T_i,T_k}(u_i,u_k)=\\
   &=\left (\min (u_i^{\alpha _i},u_k^{\alpha _k})\right )^{1-\theta _i-\theta _k}
\prod_{j=i,k}\phi _j\left (\phi_j^{-1}\left (\left (\min (u_i^{\alpha _i},u_k^{\alpha _k})\right )^{\theta _j}\right )+
\phi_j^{-1}\left (u_j^{1-\alpha _j(1-\theta _j)}\right )-\phi_j^{-1}\left (u_j^{\alpha _j\theta _j}\right )\right)
  \end{aligned}
$$
from which, setting $\alpha _i=1$, we recover the survival copula associated to $(X_0,T_k)$
$$\hat C_{X_0,T_k}(u_i,u_k)=\frac{\min (u_i,u_k^{\alpha _k})}{\left (\min (u_i,u_k^{\alpha _k})\right )^{\theta_k}}
\phi _k\left (\phi_k^{-1}\left (\left (\min (u_i,u_k^{\alpha _k})\right )^{\theta _k}\right )+
\phi_k^{-1}\left (u_k^{1-\alpha _k(1-\theta _k)}\right )-\phi_k^{-1}\left (u_k^{\alpha _k\theta _k}\right )\right)
$$
 Notice that when $\alpha _k\approx0$ then
$$C_{X_0,T_k}(u_i,u_k)\approx \frac{u_i}{u_i^{\theta _k}}\phi_k\left (\phi_k^{-1}\left (u_i^{\theta _k}\right )+\phi_k^{-1}\left (u_k\right )\right )$$
which is of type (\ref{m3}) with $g(u)=u^\theta$ (see Example \ref{ex0}): the dependence structure of the observed lifetimes
with respect to the systemic shock arrival time essentially coincides with that between the idiosyncratic component of risk and the systemic one; in fact, being $\alpha_k\approx 0$,
$\lambda _j$ is large with respect to $\lambda _0$ and this corresponds to the case in which $X_j$ has a low survival distribution with respect to that of $X_0$.

\end{example}
\begin{remark}
In case of perfect dependence between each idiosyncratic shock and the corresponding systemic shock component, that is $\hat C_j(u,v)=\min (u,v)$ for $j=1,\ldots ,d$, we get
$$\bar F_{{\bf T}}(t_1,\ldots ,t_d)=\bar F_{Y_0}\left (\max_{i=1,\ldots, d}t_i\right )\prod _{j=1}^d\min\left (\bar F_{Y_j}\left(\max_{i=1,\ldots, d}t_i\right),\bar F_{X_j}(t_j)\right).$$
In the particular framework of Example \ref{ex1}, we have that, being $\min\left (G^{\gamma _j}(x),\bar F_{X_j}(x)\right )=G^{\eta _j}(x)$, if 
$\eta _j>\gamma _j$, then $\bar F_{X_j}(x)=G^{\eta _j}(x)$ and, if we assume $\eta_j >\gamma _j$ for all $j=1,\ldots ,d$, we can write
$$\bar F_{{\bf T}}(t_1,\ldots ,t_d)=G^{\gamma _0}\left (\max_{i=1,\ldots, d}t_i\right )\prod _{j=1}^d\min
\left (G^{\gamma _j}\left(\max_{i=1,\ldots, d}t_i\right),G^{\eta _j}(t_j)\right).$$
Conversely,if $\eta _j=\gamma _j$, then $\bar F_{X_j}(x)\geq G^{\gamma _j}(x)$ is not uniquely determined and if $\eta _j=\gamma _j$ for all $j=1,\ldots ,d$
$$\bar F_{{\bf T}}(t_1,\ldots ,t_d)=\bar F_{X_0}\left (\max_{i=1,\ldots, d}t_i\right ).$$

\end{remark}

\subsection{The probability of simultaneous default}
By construction, the distribution of ${\bf T}$ has a singularity generated by the occurrence of the simultaneous default 
of all the elements in the system, that is by the fact that 
the event $\{T_1=T_2=\cdots =T_d\}$ has positive probability. Both from a theoretical point of view as well as for applications, it is important to measure
the probability that the collapse of the whole system has to occur before a given time horizon.

\begin{proposition} If the random vector ${\bf T}$ has a survival distribution of type (\ref{formula}), then 
\begin{equation}\label{sing}\begin{aligned}\mathbb P\left (T_1=T_2=\cdots =T_d>t\right )&=-\int_t^{+\infty}\prod_{j=1}^d\bar F_{Z_j}(x)d\bar F_{Y_0}(x)+\\
&-\sum_{j=1}^d\int_t^{+\infty}\bar F_{Y_0}(x)\prod_{i\neq j}\bar F_{Z_i}(x)\,
\partial_1\hat C_j\left (\bar F_{Y_j}(x),\bar F_{X_j}(x)\right )d\bar F_{Y_j}(x)
\end{aligned}
\end{equation}
\end{proposition}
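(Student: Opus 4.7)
The plan is to decompose the event $\{T_1=\cdots=T_d>t\}$ according to which component of the systemic shock vector actually triggers the common death, and then integrate out this trigger.

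First I would reduce the event to a tractable form. Because the $X_j$ ($j\geq 1$) are mutually independent and have continuous survival functions, and each pair $(Y_j,X_j)$ has an absolutely continuous bivariate copula $\hat C_j$ on the diagonal, the ties $\{X_i=X_j\}$ and $\{Y_i=Y_j\}$ for $i\ne j$ have probability zero. Hence
\[
\{T_1=\cdots=T_d>t\}=\{X_0>t\}\cap\bigcap_{j=1}^d\{X_j\geq X_0\}\quad\text{a.s.}
\]
Since $X_0=\min(Y_0,Y_1,\ldots,Y_d)$ and this minimum is uniquely attained a.s., I partition the event according to the index $k\in\{0,1,\ldots,d\}$ for which $X_0=Y_k$. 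This yields $d+1$ disjoint cases and the probability is the sum.

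For the case $k=0$: conditioning on $Y_0=x$ and using that $Y_0$ is independent of $(Y_1,X_1,\ldots,Y_d,X_d)$, the remaining requirement is $Y_i>x$ and $X_i>x$ for all $i=1,\ldots,d$, together with $Y_i>x$ meaning $X_i>x$ and $Y_i>x$, i.e.\ $Z_i>x$. By mutual independence of the pairs $(Y_i,X_i)$, this gives $\prod_{j=1}^d\bar F_{Z_j}(x)$, so integrating against the law of $Y_0$ produces the first term $-\int_t^{+\infty}\prod_{j=1}^d\bar F_{Z_j}(x)\,d\bar F_{Y_0}(x)$.

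For each case $k\in\{1,\ldots,d\}$: conditioning on $Y_k=x$, I need $Y_0>x$ (factor $\bar F_{Y_0}(x)$), $Y_i>x$ and $X_i>x$ for $i\neq k$, $i\geq 1$ (factor $\prod_{i\neq k}\bar F_{Z_i}(x)$ by independence across pairs), and $X_k\geq x$ \emph{given} $Y_k=x$. Here the dependence enters: differentiating $\hat C_k(\bar F_{Y_k}(y),\bar F_{X_k}(x))=\mathbb P(Y_k>y,X_k>x)$ in $y$ gives
\[
\mathbb P(X_k>x\mid Y_k=y)=\partial_1\hat C_k\bigl(\bar F_{Y_k}(y),\bar F_{X_k}(x)\bigr),
\]
so that $\mathbb P(Y_k\in dx,X_k>x)=-\partial_1\hat C_k(\bar F_{Y_k}(x),\bar F_{X_k}(x))\,d\bar F_{Y_k}(x)$. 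Multiplying by the independent factors and integrating over $x>t$ yields the $k$-th summand of the second sum. Summing over $k=1,\ldots,d$ and adding the $k=0$ contribution gives \eqref{sing}.

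The main obstacle is the bookkeeping of the conditional step in the $k\geq 1$ cases, i.e.\ correctly expressing $\mathbb P(Y_k\in dx,X_k\geq x)$ through $\partial_1\hat C_k$ while simultaneously exploiting the independence of $Y_0$ and of the pairs $(Y_i,X_i)$ across $i$; once the disjoint decomposition via the a.s.-unique minimizer is set up, the remainder is a routine change of variables from Lebesgue measure to $d\bar F_{Y_k}$.
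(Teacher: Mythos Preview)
Your proposal is correct and follows essentially the same route as the paper: the paper's terse proof conditions on $X_0$ and immediately writes down the integrals against $d\bar F_{Y_0}$ and $d\bar F_{Y_j}$, which is precisely your decomposition according to which $Y_k$ attains the minimum $X_0$. You have simply made explicit the step the paper leaves implicit, including the identification of $\mathbb P(Y_k\in dx,\,X_k>x)$ with $-\partial_1\hat C_k(\bar F_{Y_k}(x),\bar F_{X_k}(x))\,d\bar F_{Y_k}(x)$.
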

\begin{proof}
 $$\begin{aligned}
    &\mathbb P\left (T_1=T_2=\cdots =T_d>t\right )=
    \mathbb E\left [\mathbb P\left (\left .X_1>X_0,\ldots , X_d>X_0\right\vert X_0\right ){\bf 1}_{\{X_0>t\}}\right ]=\\
    &-\int_t^{+\infty}\prod_{j=1}^d\hat C_j\left (\bar F_{Y_j}(x),\bar F_{X_j}(x)\right )d\bar F_{Y_0}(x)+\\
&-\int _t^{+\infty}\bar F_{Y_0}(x)\sum_{j=1}^d\partial_1\hat C_j\left (\bar F_{Y_j}(x),\bar F_{X_j}(x)\right )\prod_{i\neq j}\hat C_i\left (\bar F_{Y_i}(x),\bar F_{X_i}(x)\right )d\bar F_{Y_j}(x).
   \end{aligned}
$$
\end{proof}
If each $\hat C_j$ is of Archimedean type with strict generator $\phi_j$, and $h_j=\phi_j^\prime\circ \phi _j^{-1}$, we have
$$\begin{aligned}\mathbb P\left (T_1=T_2=\cdots =T_d>t\right )&=-\int_t^{+\infty}\prod_{j=1}^d\bar F_{Z_j}(x)d\bar F_{Y_0}(x)+\\
&-\sum_{j=1}^d\int_t^{+\infty}\bar F_{Y_0}(x)\prod_{i\neq j}\bar F_{Z_i}(x)\frac{h_j\circ \bar F_{Z_j}(x)}
{h_j\circ \bar F_{Y_j}(x)}d\bar F_{Y_j}(x).
  \end{aligned}$$

\noindent
In particular, in the framework of Example \ref{ex1}, (\ref{sing}) takes the form
\begin{equation}\label{singsing}\mathbb P\left (T_1=T_2=\cdots =T_d>t\right )=\frac{\gamma_0}{\hat\lambda}G^{\hat\lambda}(t)+
\sum_{j=1}^d\gamma _j\int _0^{G(t)}y^{\hat \lambda -\lambda _j-1}
\frac{h_j\left (y^{\eta _j}\right )}
{h_j\left (y^{\gamma _j}\right )}dy
\end{equation}
where $\hat\lambda=\sum_{i=0}^d\lambda _j$. 
Hence
\begin{itemize}
 \item in the Clayton case (that is $\phi_j(x)=(1+x)^{-\frac 1{\beta _j}}$ with $\beta _j>0$ and $h_j(x)=-\frac 1{\beta _j}x^{1+\beta _j}$, for $j=1,\ldots ,d$), we have
  $$\mathbb P\left (T_1=T_2=\cdots =T_d>t\right )=\frac{\gamma_0}{\hat\lambda}G^{\hat\lambda}(t)+\sum_{j=1}^d\frac{\gamma _j}{\hat\lambda +\lambda _j\beta _j}G^{\hat\lambda +\lambda _j\beta _j}(t)$$
  and
 $$\begin{aligned}\mathbb P\left (T_1=T_2=\cdots =T_d\right )&=\frac{\gamma_0}{\hat\lambda}+\sum_{j=1}^d\frac{\gamma _j}{\hat\lambda +\lambda _j\beta _j}=\\
& =\frac{\theta_0}{\sum_{i=1}^d\alpha _i^{-1}-(d-1)}+\sum_{j=1}^d\frac{\theta_j}{\sum_{i=1}^d\alpha _i^{-1}-d+\beta_j(\alpha_j^{-1}-1)}
 ,\end{aligned}$$
\item in the Gumbel case (that is $\phi_j(x)=e^{-x^{\frac 1{\beta _j}}}$ with $\beta _j\geq 1$ and $h_j(x)=-\frac 1{\beta _j}x(-\ln x)^{1-\beta _j}$, for $j=1,\ldots ,d$), we have
  $$\mathbb P\left (T_1=T_2=\cdots =T_d>t\right )=\left (\frac{\gamma_0}{\hat\lambda}+\frac{1}{\hat\lambda}\sum_{j=1}^d\gamma _j\left (1+\frac{\lambda _j}{\gamma _j}\right )^{1-\beta _j}\right )G^{\hat\lambda}(t)$$
  and
 $$\begin{aligned}\mathbb P\left (T_1=T_2=\cdots =T_d\right )&=\frac{\gamma_0}{\hat\lambda}+\frac{1}{\hat\lambda}\sum_{j=1}^d\gamma _j\left (1+\frac{\lambda _j}{\gamma _j}\right )^{1-\beta _j}=\\
&=\frac{\theta_0}{\sum_{i=1}^d\alpha _i^{-1}-(d-1)}+
\sum_{j=1}^d\frac{\theta_j}{\sum_{i=1}^d\alpha _i^{-1}-(d-1)}\left (1+\frac{1-\alpha _j}{\alpha _j\theta_j}\right )^{1-\beta _j}.    
   \end{aligned}
.$$
\end{itemize}

\subsection{The Kendall's function and the Kendall's tau}
In this section we analyze the pairwise dependence structure of the random vector ${\bf T}$
through the study of the pairwise Kendall's function and the pairwise Kendall's tau.\medskip

\noindent In order to simplify the notation we set, for $i,k=1,\ldots ,d$, $i\neq k$,
$$P_{i,k}(x)=\frac{\bar F_{X_0}(x)}{\bar F_{Y_i}(x)\bar F_{Y_k}(x)}.$$
It can be easily checked that the survival distributions of the pairs $(T_i,T_k)$ are
$$\bar F_{i,k}(t_i,t_k)= P_{i,k}\left (\max(t_i, t_k)\right ) \prod_{j=i,k}\hat C_j\left (\bar F_{Y_j}(\max(t_i, t_k)),\bar F_{X_j}(t_j)\right ).
$$
\begin{proposition}
Let us assume that $\hat C_i$ and $\hat C_k$ are strictly increasing with respect to each argument. Then, for $t\in[0,1]$,
$$\begin{aligned}K_{i,k}(t)&=t-t\left (\ln\left (\frac{(\bar F_{Z_i}\cdot P_{i,k})\circ \bar F^{-1}_{T_i}(t)}{(\bar F_{Z_i}\cdot P_{i,k})(z_t)}\right )+
\ln\left (\frac{(\bar F_{Z_k}\cdot P_{i,k})\circ \bar F^{-1}_{T_k}(t)}{(\bar F_{Z_k}\cdot P_{i,k})(z_t)}\right )\right )+
\\
&-\int_{z_t}^{\bar F_{T_i}^{-1}(t)}\bar F_{Z_i} P_{ik}(x)\, \partial _1\hat C_k\left (\bar F_{Y_k}(x),\bar F_{X_k}(h_t(x))\right )
d\bar F_{Y_k}(x)+\\
&-\int_{z_t}^{\bar F_{T_k}^{-1}(t)}\bar F_{Z_k} P_{ik}(x)\, \partial _1\hat C_i\left (\bar F_{Y_i}(x),\bar F_{X_i}(g_t(x))\right )
d\bar F_{Y_i}(x)
\end{aligned}$$
where $z_t$ is the solution of $\bar F_{i,k}(z_t,z_t)=t$, $h_t(\cdot )$ solves $\bar F_{i,k}(x,h_t(x))=t$ for $z_t<x\leq \bar F_{T_i}^{-1}(t)$
and $g_t(\cdot )$ solves $\bar F_{i,k}(g_t(y),y)=t$ for $z_t<y\leq \bar F_{T_k}^{-1}(t)$.
\end{proposition}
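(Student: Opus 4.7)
The plan is to adapt the Genest-Rivest conditioning argument used in Proposition \ref{propo1} to a bivariate law whose joint distribution has both an absolutely continuous part off the diagonal and a singular component along $\{t_i=t_k\}$. Writing
$$K_{i,k}(t)=\int_0^{+\infty}\mathbb P\bigl(\bar F_{i,k}(x,T_k)\le t\mid T_i=x\bigr)f_{T_i}(x)\,dx,$$
the strict monotonicity of $\hat C_i$ and $\hat C_k$ in each argument ensures that for fixed $x$ the set $\{y:\bar F_{i,k}(x,y)\le t\}$ is a half-line $\{y\ge \ell_t(x)\}$. The diagonal value $\bar F_{i,k}(x,x)=P_{i,k}(x)\bar F_{Z_i}(x)\bar F_{Z_k}(x)$ sorts out three regimes in $x$: for $x\ge \bar F_{T_i}^{-1}(t)$ the conditional probability equals $1$ and the cumulative contribution is $t$; for $z_t<x\le \bar F_{T_i}^{-1}(t)$ one has $\ell_t(x)=h_t(x)<x$; and for $0<x<z_t$ one has $\ell_t(x)>x$, determined by $g_t(\ell_t(x))=x$.

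On the latter two regimes I use the identity $\mathbb P(T_k>y\mid T_i=x)f_{T_i}(x)=-\partial_1\bar F_{i,k}(x,y)$, which is valid whenever $y\ne x$ (so no singular mass is missed). In the middle regime the point $(x,h_t(x))$ belongs to the half-plane $y<x$, where $\bar F_{i,k}(x,y)=P_{i,k}(x)\bar F_{Z_i}(x)\hat C_k(\bar F_{Y_k}(x),\bar F_{X_k}(y))$. Differentiating in $x$ at fixed $y$, substituting $\hat C_k(\bar F_{Y_k}(x),\bar F_{X_k}(h_t(x)))=t/(P_{i,k}(x)\bar F_{Z_i}(x))$ obtained from $\bar F_{i,k}(x,h_t(x))=t$, and then integrating, the first summand telescopes into $t\ln\bigl((P_{i,k}\bar F_{Z_i})\circ\bar F_{T_i}^{-1}(t)/(P_{i,k}\bar F_{Z_i})(z_t)\bigr)$ while the second summand reproduces the curve integral involving $\partial_1\hat C_k\,d\bar F_{Y_k}$ written in the statement.

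For the inner regime $0<x<z_t$ the relevant half-plane is $y>x$, where $\bar F_{i,k}(x,y)=P_{i,k}(y)\bar F_{Z_k}(y)\hat C_i(\bar F_{Y_i}(y),\bar F_{X_i}(x))$. Differentiating directly in $x$ would yield a $\partial_2\hat C_i$ rather than the $\partial_1\hat C_i$ required, so I switch integration variable to $y$ via $x=g_t(y)$, mapping the range $0<x<z_t$ onto $z_t<y<\bar F_{T_k}^{-1}(t)$, and use the implicit-differentiation identity $\partial_1\bar F_{i,k}(g_t(y),y)\,g_t'(y)+\partial_2\bar F_{i,k}(g_t(y),y)=0$ to rewrite the integrand as $-\partial_2\bar F_{i,k}(g_t(y),y)$. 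Repeating the computation of the preceding paragraph with the roles of $i$ and $k$ exchanged produces the second logarithmic term and the second curve integral; summing the three contributions yields the claimed identity.

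The most delicate step is the change-of-variable in the inner regime, together with the check that the boundary values $h_t(z_t)=g_t(z_t)=z_t$, $h_t(\bar F_{T_i}^{-1}(t))=0$ and $g_t(\bar F_{T_k}^{-1}(t))=0$ glue the three regimes correctly at $x=z_t$ and at the endpoints. Everything else amounts to transcribing the mechanism of Proposition \ref{propo1}: it is again the multiplicative prefactor $P_{i,k}\bar F_{Z_j}$ sitting in front of the copula, analogous to the $u/g(u)$ factor there, that is responsible for the logarithmic contributions once the level-set identity is invoked to eliminate the copula values.
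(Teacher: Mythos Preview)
Your argument is correct, and it reaches the result via the same underlying computations as the paper, but the organization differs in one respect worth noting. The paper does not condition on $T_i$ throughout: it first writes the symmetric geometric decomposition
\[
K_{i,k}(t)=\bar F_{T_i}(z_t)-\mathbb P\bigl((T_i,T_k)\in\mathcal D_1\bigr)+\bar F_{T_k}(z_t)-\mathbb P\bigl((T_i,T_k)\in\mathcal D_2\bigr)-t,
\]
where $\mathcal D_1$ lies below the level curve in the half-plane $t_i>t_k$ and $\mathcal D_2$ is the mirror region in $t_k>t_i$, and then evaluates $\mathbb P(\mathcal D_1)$ by conditioning on $T_i$ and $\mathbb P(\mathcal D_2)$ by conditioning on $T_k$. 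Your route---conditioning on $T_i$ across all three regimes and then, in the inner regime $0<x<z_t$, performing the change of variable $x=g_t(y)$ together with the implicit-differentiation identity $\partial_1\bar F_{i,k}\,g_t'+\partial_2\bar F_{i,k}=0$---is an equivalent rearrangement: that change of variable is precisely what converts ``conditioning on $T_i$ below the diagonal'' into the paper's ``conditioning on $T_k$ in $\mathcal D_2$''. What you gain is a single disintegration formula and a clean explanation of why the singular mass on $\{T_i=T_k\}$ causes no trouble (since $\ell_t(x)\ne x$ in both nontrivial regimes); what the paper's symmetric decomposition buys is that no change of variable or implicit differentiation is needed, as each region is handled with the ``natural'' conditioning variable from the start. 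The boundary checks you flag, $h_t(z_t)=g_t(z_t)=z_t$ and $h_t(\bar F_{T_i}^{-1}(t))=g_t(\bar F_{T_k}^{-1}(t))=0$, are exactly the ones that make the pieces match, and they follow directly from the definitions of $z_t$, $h_t$, $g_t$.
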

\begin{proof}
Since $\bar F_{i,k}(x,x)=\bar F_{Z_i}(x)\bar F_{Z_k}(x)P_{i,k}(x)$ is strictly decreasing, given any $t\in[0,1]$, the solution of $\bar F_{i,k}(x,x)=t$, 
denoted with $z_t$, is well defined.
\\
If we restrict to $t_i>t_k$, then
\begin{equation}\label{restriction}\bar F_{i,k}(t_i,t_k)=\bar F_{Z_i}(t_i)\hat C_k\left (\bar F_{Y_k}(t_i),\bar F_{X_k}(t_k)\right )P_{i,k}(t_i)\end{equation}
which is strictly decreasing with respect to $t_k\in[0,t_i)$ for any given $t_i$. Hence, for $x\in (z_t,\bar F_{T_i}^{-1}(t)]$ and for any $t\in [0,1]$, the function $h_t$ satisfying 
$\bar F_{i,k}(x,h_t(x))=t$ is well defined.
\\By similar arguments, the function $g_t$ of the statement is also well defined.
\medskip

If we denote with $K_{i,k}$ the
Kendall's function associated to the pair $(T_i,T_k)$ and we rewrite it in terms of the survival joint distribution function, we get
$$\begin{aligned}
  K_{i,k}(t)
  &=\mathbb P(\bar F_{i,k}(T_i,T_k)\leq  t)=\\
&=\bar F_{T_i} (z_t)-\mathbb P((T_i,T_k)\in \mathcal D_1)+\bar F_{T_k} (z_t)-\mathbb P((T_i,T_k)\in \mathcal D_2)-t 
  \end{aligned}
$$
where
$$\mathcal D_1=\{(t_i,t_k): z_t<t_i\leq \bar F_{T_i}^{-1}(t), 0\leq t_k\leq h_t(t_i)\}$$
and 
$$\mathcal D_2=\{(t_i,t_k): z_t<t_k\leq \bar F_{T_k}^{-1}(t), 0\leq t_i\leq g_t(t_k)\}.$$
\medskip

Let us start computing $\mathbb P((T_i,T_k)\in \mathcal D_1)$. 
Since here (\ref{restriction}) holds, thanks to the definitions of $z_t$ and $h_t$, we have 
$$\begin{aligned}&\mathbb P\left ((T_i,T_k)\in\mathcal D_1\right )=
\int_{z_t}^{\bar F_{T_i}^{-1}(t)}\left (\mathbb P(T_k> h_t(x)\vert T_i=x)-1\right )d\bar F_{T_i}(x)=\\
&=\int_{z_t}^{\bar F_{T_i}^{-1}(t)}\mathbb P(T_k> h_t(x)\vert T_i=x)d\bar F_{T_i}(x)-t+\bar F_{T_i}(z_t) =\\
&=\int_{z_t}^{\bar F_{T_i}^{-1}(t)}t\cdot\frac {d(\bar F_{Z_i}\cdot P_{ik})(x)}{\bar F_{Z_i}\cdot P_{ik}(x)}+\\
&+\int_{z_t}^{\bar F_{T_i}^{-1}(t)}\bar F_{Z_i}\cdot P_{ik}(x)
\cdot \partial _1\hat C_k\left (\bar F_{Y_k}(x),\bar F_{X_k}(h_t(x))\right )d\bar F_{Y_k}(x)-t+\bar F_{T_i}(z_t)=\\
&=t\ln\left (\frac{(\bar F_{Z_i}\cdot P_{i,k})\circ \bar F^{-1}_{T_i}(t)}{(\bar F_{Z_i}\cdot P_{i,k})(z_t)}\right )+\\
&+\int_{z_t}^{\bar F_{T_i}^{-1}(t)}\bar F_{Z_i}P_{ik}(x)\cdot \partial _1\hat C_k\left (\bar F_{Y_k}(x),\bar F_{X_k}(h_t(x))\right )
d\bar F_{Y_k}(x)-t+\bar F_{T_i}(z_t).
\end{aligned}$$
Since $\mathbb P((T_i,T_k)\in \mathcal D_2)$ can be similarly computed, we get 
$$\begin{aligned}K_{i,k}(t)&=t-t\left (\ln\left (\frac{(\bar F_{Z_i}\cdot P_{i,k})\circ \bar F^{-1}_{T_i}(t)}{(\bar F_{Z_i}\cdot P_{i,k})(z_t)}\right )+
\ln\left (\frac{(\bar F_{Z_k}\cdot P_{i,k})\circ \bar F^{-1}_{T_k}(t)}{(\bar F_{Z_k}\cdot P_{i,k})(z_t)}\right )\right )+
\\
&-\int_{z_t}^{\bar F_{T_i}^{-1}(t)}\bar F_{Z_i}P_{ik}(x)\cdot \partial _1\hat C_k\left (\bar F_{Y_k}(x),\bar F_{X_k}(h_t(x))\right )
d\bar F_{Y_k}(x)+\\
&-\int_{z_t}^{\bar F_{T_k}^{-1}(t)}\bar F_{Z_k}P_{ik}(x)\cdot \partial _1\hat C_i\left (\bar F_{Y_i}(x),\bar F_{X_i}(g_t(x))\right )
d\bar F_{Y_i}(x).
\end{aligned}$$
\end{proof}

If we consider the case in which $\hat C_i$ and $\hat C_k$ are Archimedean copulas with strict generator $\phi_i$ and $\phi_k$, respectively, we have
$$\begin{aligned}K_{i,k}(t)&=t-t\left (\ln\left (\frac{(\bar F_{Z_i}\cdot P_{i,k})\circ \bar F^{-1}_{T_i}(t)}{(\bar F_{Z_i}\cdot P_{i,k})(z_t)}\right )+
\ln\left (\frac{(\bar F_{Z_k}\cdot P_{i,k})\circ \bar F^{-1}_{T_k}(t)}{(\bar F_{Z_k}\cdot P_{i,k})(z_t)}\right )\right )+
\\
&-\int_{z_t}^{\bar F_{T_i}^{-1}(t)}\bar F_{Z_i}\cdot P_{ik}(x)\frac{h_k\left (\frac t{\bar F_{Z_i}\cdot P_{i,k}(x)}\right )}
{h_k\circ \bar F_{Y_k}(x)}d\bar F_{Y_k}(x)+
\\
&-\int_{z_t}^{\bar F_{T_k}^{-1}(t)}\bar F_{Z_k}\cdot P_{ik}(x)\frac{h_i\left (\frac t{\bar F_{Z_k}\cdot P_{i,k}(x)}\right )}
{h_i\circ \bar F_{Y_i}(x)}d\bar F_{Y_i}(x)
\end{aligned}$$
with $\bar F_{Z_j}(x)=\phi_j\left (\phi_j^{-1}(\bar F_{Y_j}(x))+\phi_j^{-1}(\bar F_{X_j}(x))\right )$ and $h_j=\phi_j^\prime\circ\phi_j^{-1}$, for $j=i,k$.
\begin{example}\label{ex2}
Let us consider the same framework of Example \ref{ex1}. We have
$$z_t=G^{-1}\left (t^{1/(\lambda_0+\lambda_i+\lambda _k)}\right )$$
and 
$$\bar F_{T_j}^{-1}(t)=G^{-1}\left (t^{1/(\lambda_0+\lambda_j)}\right ),\, j=i,k.$$
Under the same notation of Example \ref{ex1}, we recover
$$\begin{aligned}K_{i,k}(t)&=t-t\ln t\left (
\frac{\alpha _i(1-\alpha _k)(1-\alpha _i\theta_ k)}{\alpha _i+\alpha _k-\alpha_i\alpha _k}
+\frac{\alpha _k(1-\alpha _i)(1-\alpha _k\theta_ i)}{\alpha _i+\alpha _k-\alpha_i\alpha _k}
\right )-\\
&-\theta_k\int_{t^{\frac{\alpha_i\alpha_k}{\alpha_i+\alpha _k-\alpha _i\alpha _k}}}^{t^{\alpha _i}}
y^{\frac{1-\alpha _i}{\alpha _i}}\frac{h_k\left (ty
^{-\frac{1-\theta_k\alpha _i}{\alpha _i}}
\right )}{h_k(y^{\theta_k})}
dy+\\
&-\theta_i\int_{t^{\frac{\alpha_i\alpha_k}{\alpha_i+\alpha _k-\alpha _i\alpha _k}}}^{t^{\alpha _k}}
y^{\frac{1-\alpha _k}{\alpha _k}}\frac{h_i\left (ty
^{-\frac{1-\theta_i\alpha _k}{\alpha _k}}
\right )}{h_k(y^{\theta_i})}
dy.
\end{aligned}$$
Let us consider Clayton and Gumbel copulas specific cases.
 \begin{enumerate}
  \item Clayton case ($\phi_j(x)=(1+x)^{\frac 1{\beta_j}}$, $\beta _j> 0$, $j=i,k$).
  
If we set $$\tau^{MO}_{ik}=\frac{\alpha_k\alpha_i}{\alpha_k+\alpha_i-\alpha_k\alpha_i}$$
which is the Kendall's tau of the Marshall-Olkin bivariate copula with parameters $\alpha_i$ and $\alpha _k$ and 
 $$\rho_{rs}=\frac{1-\alpha _s}{\alpha _s}\tau^{MO}_{rs}, r,s=i,j$$
 we get 
 $$\begin{aligned}K_{i,k}(t)
&=t\left (1+\frac{\theta_k}{\beta _k}\alpha _i+\frac{\theta_i}{\beta _i}\alpha _k\right )-t\ln t\left ((1-\theta_k\alpha _i)\rho_{ik}+
(1-\theta_i\alpha _k)\rho_{ki}\right )+\\
&-\frac{\theta_k}{\beta _k}\alpha _it^{\rho_{ik}\beta _k+1}-\frac{\theta_i}{\beta _i}\alpha _kt^{\rho_{ki}\beta _i+1}.
\end{aligned}
$$ 
Notice that the above Kendall's function can be decomposed as
$$K_{i,k}(t)=K_{ik}^0(t)+K_{0,k}^{(i)}+K_{0,i}^{(k)}-2K^I(t)$$
where
\begin{equation}\label{kfc0}K_{ik}^0(t)=t-\left (1-\tau^{MO}_{ik}\right )t\ln t,\end{equation}
\begin{equation}\label{kfc1}K_{0,k}^{(i)}=t\left (1+\frac{\theta_k\alpha _i}{\beta_k}\right )-(1-\theta_k\alpha _i\rho_{ik} )t\ln t-
\frac{\theta_k\alpha _i}{\beta_k} t^{1+\beta_k\rho_{ik}},
\end{equation}
\begin{equation}\label{kfc2}K_{0,i}^{(k)}=t\left (1+\frac{\theta_i\alpha _k}{\beta_i}\right )-(1-\theta_i\alpha _k\rho_{ki} )t\ln t-
\frac{\theta_i\alpha _k}{\beta_i} t^{1+\beta_i\rho_{ki}}
\end{equation}
and 
\begin{equation}\label{kfc3}K^I(t)=t-\ln t.\end{equation}
Notice that: (\ref{kfc0}) is the Kendall's function of the Marshall-Olkin copula 
with parameters $\alpha _i$ and $\alpha _k$; (\ref{kfc1}) is a Kendall's function of type (\ref{fclay0}) with parameters $\theta=\theta_k\alpha_i\rho_{ik}$ and 
$\beta=\beta_k\rho_{ik}$ (that represents the effect of the dependence between $Y_k$ and $X_k$ on the resulting dependence structure of $(T_i,T_k)$);
simmetrically, (\ref{kfc2}) is a Kendall's function of type (\ref{fclay0}) with parameters $\theta=\theta_i\alpha_k\rho_{ki}$ and 
$\beta=\beta_i\rho_{ki}$; (\ref{kfc3}) is the Kendall's function of the independence copula. As a consequence, we get a very meaningful decomposition of the Kendall's tau:
$$\tau_{ik}=\tau^{MO}_{ik}+\bar\tau^{(i)}_{0,k}+\bar \tau _{0,i}^{(k)}
$$
where 
$$\bar\tau^{(i)}_{0,k}=\alpha _i\rho_{ik}\theta_k\frac{\rho_{i,k}\beta_k}{\rho_{i,k}\beta_k+2}\text{ and }
\bar\tau^{(k)}_{0,i}=\alpha _k\rho_{ki}\theta_i\frac{\rho_{k,i}\beta_i}{\rho_{k,i}\beta_i+2}$$
are Kendall's tau of type (\ref{clay0}) with suitably modified parameters.

It follows that
\begin{equation}\label{clayy}\begin{aligned}\tau_{T_k,X_0}&=\alpha _k+(1-\alpha _k)\theta_k\frac{(1-\alpha_k)\beta_k}{(1-\alpha _k)\beta _k+2}=\\
                              &=\tau^{MO}_{T_k,X_0}+\tau_{0,k}^*
                             \end{aligned}\end{equation}
where $\tau^{MO}_{T_k,X_0}$ is the Kendall's tau between the observed lifetime and the systemic shock arrival time in the Marshall-Olkin model and 
$\tau_{0,k}^*$ is a Kendall's tau of type (\ref{clay0}) with parameters rescaled by the coeffcient $1-\alpha _k$. 
\item Gumbel case ($\phi_j(x)=e^{-x^{\frac 1{\beta _j}}}$, $\beta _j\geq 1$, $j=i,k$).
If
$$\mathcal I(a,b,\beta)=\int_a^b\frac {1}{z^{\beta}(z+1)}dz,$$

$$\begin{aligned}
K_{i,k}(t)&=t-t\ln t\left \{1-\tau_{ik}^{MO}+\tau^{MO}_{ik}\left [\theta _k\left (1-\left (\frac{\theta _k\alpha _k}{1-\alpha _k(1-\theta _k)}\right )^{\beta _k-1}\right )+\right .\right .\\
&\left .\left .-\theta _i\left (1-\left (\frac{\theta _i\alpha _i}{1-\alpha _i(1-\theta _i)}\right )^{\beta _i-1}\right )\right ]+\right .\\
&\left .- (\beta_k-1)\left (\frac{\alpha_i\theta_k}{1-\alpha _i\theta_k}\right )^{\beta _k}\mathcal I\left (\frac{\alpha_i\theta_k}{1-\alpha _i\theta_k},\frac{\alpha_i\theta_k}{\tau_{ik}^{MO}\theta _k(1-\alpha _i\theta_k)}-1,\beta _k\right)+\right .\\
&\left .- (\beta_i-1)\left (\frac{\alpha_k\theta_i}{1-\alpha _k\theta_i}\right )^{\beta _i}\mathcal I\left (\frac{\alpha_k\theta_i}{1-\alpha _k\theta_i},\frac{\alpha_k\theta_i}{\tau_{ik}^{MO}\theta _i(1-\alpha _k\theta_i)}-1,\beta _i\right)
\right \}
\end{aligned}
$$
and
$$\begin{aligned}
\tau_{i,k}&=\tau_{ik}^{MO}-\tau^{MO}_{ik}\left [\theta _k\left (1-\left (\frac{\theta _k\alpha _k}{1-\alpha _k(1-\theta _k)}\right )^{\beta _k-1}\right )-\theta _i\left (1-\left (\frac{\theta _i\alpha _i}{1-\alpha _i(1-\theta _i)}\right )^{\beta _i-1}\right )\right ]+\\
&+ (\beta_k-1)\left (\frac{\alpha_i\theta_k}{1-\alpha _i\theta_k}\right )^{\beta _k}\mathcal I\left (\frac{\alpha_i\theta_k}{1-\alpha _i\theta_k},\frac{\alpha_i\theta_k}{\tau_{ik}^{MO}\theta _k(1-\alpha _i\theta_k)}-1,\beta _k\right)+\\
&+ (\beta_i-1)\left (\frac{\alpha_k\theta_i}{1-\alpha _k\theta_i}\right )^{\beta _i}\mathcal I\left (\frac{\alpha_k\theta_i}{1-\alpha _k\theta_i},\frac{\alpha_k\theta_i}{\tau_{ik}^{MO}\theta _i(1-\alpha _k\theta_i)}-1,\beta _i\right )\end{aligned}$$
Even if, as in the Clayton case, we can recognize that the resulting dependence is the sum of the Marshall-Olkin one and two different contributions arising
from the assumed dependence between $Y_i$ and $X_i$ and between $Y_k$ and $X_k$, unlike that case, the latter ones 
cannot be written in a closed form as modifications of the corresponding ones
in Example \ref{ex0}.

Moreover, we have
$$\begin{aligned}\tau_{T_k,X_0}&=\alpha _k-\alpha _k\theta _k\left (1-\left (\frac{\theta _k\alpha _k}{1-\alpha _k(1-\theta _k)}\right )^{\beta _k-1}\right )+\\
&+ (\beta_k-1)\left (\frac{\theta_k}{1-\theta_k}\right )^{\beta _k}\mathcal I\left (\frac{\theta_k}{1-\theta_k},\frac{1-\alpha_k(1-\theta_k)}{\alpha_k(1-\theta_k)},\beta _k\right)
\end{aligned}$$
 
 \end{enumerate}

\end{example}

\section{Application to the analysis of the systemic riskiness in the European banking system}\label{sifi}
In this section we apply the model presented and discussed in previous sections to the analysis of the riskiness of the so called too-big-to-fail banks 
in the European banking system. We define systemic riskiness as the capability of a bank to induce a systemic crisis (collapse) in the banking system:
in our model, this can be measured thought the degree of dependence between the idiosyncratic component of the risk of default of the bank and the
systemic shock arrival time that causes the simultaneous default of all the banks in the system, that is through the Kendall's tau $\tau_{X_0,X_j}$ of the vector $(X_0,X_j)$.
\medskip

For the empirical analysis we restrict to the setup considered in Examples \ref{ex1} and \ref{ex2}.
More specifically, we consider the case in which all bivariate underlying copulas, modeling the dependence structure between each idiosyncratic component
and the associated systemic shock component, are of Clayton type. Since Clayton copula exibits lower-tail dependence, we are assuming 
stronger dependence between each idiosyncratic shock arrival time $X_j$ and the corresponding systemic component arrival time $Y_j$ 
when they have a very high probability to occur: this is in line with the well known fact that dependence tends to increase in crisis periods. This choice has also the advantage to let us deal with very nice and meaningful formulas.
\medskip

In the assumed setup, the bivariate Kendall's tau of the pairs $(T_i,T_k)$ depend
on the the set of parameters 
$$\boldsymbol{\Theta}=(\alpha _1,\ldots ,\alpha _d,\theta_0,\ldots, \theta _d,\beta _1,\ldots ,\beta _d):$$
$\alpha _j$ represents the ratio between the systemic shock intensity and the marginal one; $\theta _j$ measures the contribution of each bank 
to the systemic shock intensity while $\theta _0$ measures the contribution of some completely independent shock;
the parameters $\beta _j$ are the parameters of the involved bivariate copulas. As shown in (\ref{cc1}), these parameters fully characterize the dependence
structure of the vector of observed lifetimes ${\bf T}$. 

The estimation technique will consist in a moment based approach, through which 
theoretical bivariate Kendall's tau will be fitted to empirical ones. 
Once the parameters are estimated, we can use (\ref{clay0}) to estimate the systemic riskiness of each bank.
\subsection{Data set}
Our data set consists of daily 5 years CDS quotes, from 01/01/2009 to 31/12/2016
of the European banks classifies as SIFI by the Financial Stability Board \footnote{See the report
``2016 list of global systemically important banks (G-SIBs)'' published by the Financial Stability Board, http://www.fsb.org/wp-content/uploads/2016-list-of-global-systemically-important-banks-G-SIBs.pdf}.
Data were downloaded from Datastream.
\medskip

We assume that all arrival times are exponentially distributed, that is, in the notation of Examples \ref{ex1} and \ref{ex2}, $G(x)=e^{-x}$: as a consequence, 
also observable lifetimes are exponentially distributed, with intensities $\lambda_0+\lambda _j$ for $j=1,\ldots ,d$.
We also assume a constant interest rate and costant Loss-Given-Default. Thanks to these assumptions, survival probabilities and intensities can be easily extracted from $CDS$ spreads (see Brigo and Mercurio, p.735-6).
\smallskip

Since a sample of default times is not available, we are not in the position to recover the empirical Kendall's tau from default times data.
However, the Kendall's tau is the difference between the proportion of concordant and discordant pairs of observations and an increase in the intensity of default 
corresponds to the perception that the default 
time is going to occur earlier: 
in the absence of more appropriate data, we recover intensities from the CDS spreads dataset and we assume as empirical Kendall's tau those 
estimated from intensities. As a consequence our analysis will be based on the information implied by the CDS.
\subsection{Estimation procedure}
Let $\hat \tau_{ik}$, $i=1,\ldots ,d-1$, $k=i+1,\ldots ,d$ be the estimated pairwise empirical Kendall's tau and $\tau_{ik}(\alpha _i,\alpha _k,\theta _i,\theta _k,\beta _i,\beta_k)$
be the corresponding theoretical ones. Parameters are estimated by solving
\begin{equation}\label{optimal}\boldsymbol{\hat\Theta} = \underset{\boldsymbol{\Theta}}{\operatorname{argmin}} \sum_{i=1}^{d-1}\sum_{k=i+1}^{d} \left(\hat
\tau_{i,k}-\tau_{i,j}(\alpha _i,\alpha _k,\theta _i,\theta _k,\beta _i,\beta_k)\right )^2. \end{equation}
This moment based procedure is a generalization of the Kendall's tau-based estimation procedure considered in Genest and Rivest (1993) to the 
multidimensional framework and it has been analyzed and studied in Mazo et al. (2015). 

\noindent The optimization required in (\ref{optimal}) is not a trivial task and can only be solved numerically.

\subsection{Results}
The procedure applied to all SIFI European banks does not provide a good fit and the global minimum remains far from 0. Things work much better
if one restrict the analysis to the banks that in the list of globally systemically important banks provided by the Financial Stability Board, are identified as
particularly systemically risky since they have associated buckets higher than $1$ (higher buckets correspond to higher 
level of systemic importance): 
BNP Paribas, Deutsche Bank, HSBC, Barclays.
\medskip

The estimation is conducted on a yearly basis and, once the parameters have been estimated, the Kendall's tau $\tau_{X_0,X_j}$ are evaluated according to 
(\ref{clay0}).
In Table \ref{tab1} we show the obtained values of the Kendall's tau between each idyosincratic component $X_j$ and 
the systemic shock $X_0$.

\begin{center}
 \begin{table}
\caption{\footnotesize Yearly Kendall's tau values $\tau_{X_0,X_j}$.}\label{tab1}\vskip 1pc
\begin{center}
\begin{tabular}{|c|c|c|c|c|}
\hline
&DEUTSCHE BANK &BNP PARIBAS&BARCLAYS&HSBC\\
\hline
2009&0.66425573&0.04839780&0.25905985&0.01702301\\
\hline
2010&0.1961576&0.0000000&0.2135096&0.5891973\\
\hline
2011&0.00000000&0.17627564&0.04529224&0.77736158\\
\hline
2012&0.00000000&0.02586159&0.85703128&0.11605882\\
\hline
2013&0.0000000&0.3797177&0.0000000&0.6192373\\
\hline
2014&0.07860066&0.00000000&0.00000000&0.92038204\\
\hline
2015&0.0000000&0.3477391&0.0000000&0.6512462\\
\hline
2016&0.1911062&0.0000000&0.0000000&0.8068463\\
\hline
\end{tabular}
\end{center}
\end{table}
\end{center}

It worths to mention that the global minimum in (\ref{optimal}) is very close to $0$ in years 2009-2011. In particular, the fit is particularly good in year 2009: in this year the US banking crises has spread in Europe with its systemically relevant effects and,
as sown in Table \ref{tab1}, all banks
are systemically risky, in the sense considered in this paper, even if with different degrees. Table \ref{tab1} shows that, even if the capability of each bank to cause
the bankrupcy of the whole banking system changes with time, HCBS is globally the most risky in the analyzed period. Comparing the obtained results  with the 
available 2015 and 2016 reports
\footnote{see ``2015 list of global systemically important banks (G-SIBs)'', http://www.fsb.org/wp-content/uploads/2015-update-of-list-of-global-systemically-important-banks-G-SIBs.pdf} of
the Financial Stability Board (based, respectively, on end 2014 and end 2015 data), we observe the the extraordinary high degree of systemic riskiness estimated for HSBC in 2014 ($92\%$)
is in line with the association of this bank to bucket 4 (the highest) in the 2015 report, while its reduced degree of riskiness estimated in 2015 ($65\%$) is in line with the doungrade of HSBC 
to bucket 3 in the 2016 report. Additionally, according to Table \ref{tab1} Barclays can be classified as the less risky (in the sense considered in this paper) in recent past years: this is consistent with
the fact that, among the considered banks, it is the only one to which it is assigned bucket 2 in the 2016 report.
\begin{center}
 \begin{table}
\caption{\footnotesize Yearly Kendall's tau values $\tau_{X_0,T_j}$.}\label{tab2}\vskip 1pc
\begin{center}
\begin{tabular}{|c|c|c|c|c|}
\hline
&DEUTSCHE BANK &BNP PARIBAS&BARCLAYS&HSBC\\
\hline
2009&0.6939868&0.6954599&0.9342763&0.9406805\\
\hline
2010&0.8664811&0.8623019&0.8742886&0.7039019\\
\hline
2011&0.9176120&0.9161709&0.8260221&0.8066270\\
\hline
2012&0.8962147&0.8163799&0.8570315&0.7563900\\
\hline
2013&0.8559030&0.8049459&0.8226217&0.8344751\\
\hline
2014&0.6916339&0.7916272&0.9123942&0.9203822\\
\hline
2015&0.8375908&0.7653512&0.7463275&0.8434409\\
\hline
2016&0.2536626&0.5585990&0.6998114&0.8206532\\
\hline
\end{tabular}
\end{center}
\end{table}
\end{center}
In Table \ref{tab2} we list the Kendall's tau values between each observed lifetime $T_j$ and the systemic shock arrival time $X_0$.
We notice that in some cases the values $\tau_{X_0,X_j}$ and $\tau_{X_0,T_j}$ are very close each other: this is the case of Deutsche Bank in 2009,
Barclays in 2012 and HSBC in 2011, 2014 and 2016. As noticed at the end of Example \ref{ex1}, this is due to the fact that the dependence of the bank lifetime 
with the systemic risk
is essentially given by its capability to induce a systemic shock and in a negligible way by the fact that it is subjected to the systemic shock itself: this is a clear evidence of riskiness.  

\section{Conclusions}\label{conclusion}
In this paper we have introduced a generalization of the Marshall-Olkin distribution in which some non-exchangeable dependence among the underlying shocks arrival times is assumed.
More specifically, we have assumed that each lifetime is the first arrival time between an idiosyncratic and a systemic shock and, unlike the standard Marshall-Olkin model,
we have assumed some dependence between each idiosyncratic arrival time and the systemic one: the resulting model is particularly suitable to model
situations in which lifetimes influence each other only through the systemic shock arrival time on which they are dependent.
The obtained joint distribution of lifetimes is investigated: its singularity analyzed and its dependence properties studied through the induced copula functions and
the associated pairwise Kendall's function and Kendall's tau. The dependence structure is the composition of a Marshall-Olkin type dependence and the 
assumed dependence of each idiosyncratic component with the systemic shock arrival time: the higher the second component,
the more risky is the considered entity.
\medskip

The model is applied to the analysis of the systemic riskiness of SIFI type European banks.
Results show that the model gives a better fits if one restrict to particularly ``big'' SIFI banks, according to 
the Financial Stability Board buckets classification: BNP Paribas, Deutsche Bank, HSBC, Barclays.
\medskip

The obtained results allow to classify the systemic riskiness of these banks according to their capability to induce the 
simultaneous default of all
the system. This is an information that could be used, in addition to the already used ones, to more completely classify the riskiness of a bank.

\end{document}